\documentclass[a4paper,twocolumn]{quantumarticle}
\pdfoutput=1

\usepackage[english]{babel}
\usepackage{amsmath,amssymb,amsthm,amscdx,graphicx,hyperref}

\DeclareMathOperator*{\argmin}{argmin}
\DeclareMathOperator{\Tr}{Tr}
\DeclareMathOperator{\ddi}{\mathtt{ddi}}
\DeclareMathOperator{\diff}{d}

\newtheorem{dfn}{Definition}
\newtheorem{lmm}{Lemma}
\newtheorem{thm}{Theorem}

\begin{document}

\title{On the role of designs in the data-driven approach to
  quantum statistical inference}

\author{Michele Dall'Arno}

\email{michele.dallarno.mv@tut.jp}

\affiliation{Department of Computer Science and Engineering,
  Toyohashi University of Technology, Toyohashi, Japan}

\affiliation{Yukawa Institute for Theoretical Physics, Kyoto
  University, Kyoto, Japan}

\begin{abstract}
  Designs,  and  in  particular  symmetric,  informationally
  complete (SIC)  structures, play an important  role in the
  quantum   tomographic  reconstruction   process  and,   by
  extension,  in certain  interpretations of  quantum theory
  focusing  on such  a process.   This  fact is  due to  the
  symmetry of  the reconstruction formula that  designs lead
  to.  However, it  is also known that  the same tomographic
  task,  albeit  with  a  less  symmetric  formula,  can  be
  accomplished   by   any  informationally   complete   (non
  necessarily symmetric)  structure.  Here we show  that, if
  the  tomographic   task  is  replaced  by   a  data-driven
  inferential approach,  the reconstruction,  while possible
  with  designs,  cannot  by   accomplished  anymore  by  an
  arbitrary informationally  complete structure.   Hence, we
  propose the  data-driven inference  as the arena  in which
  the role  of designs  naturally emerges.   Our inferential
  approach is  based on a minimality  principle according to
  which, among  all the possible inferences  consistent with
  the data, the weakest should be preferred, in the sense of
  majorization theory and statistical comparison.
\end{abstract}

\maketitle

\section{Introduction}

We consider  the scenario  in which  a correlation  (i.e.  a
conditional probability  distribution) between an  input and
an output is given. We  regard this correlation as generated
by some (unspecified) quantum  measurement upon the input of
some  (also unspecified)  states.  In  this context,  we say
that  a  measurement  is \emph{consistent}  with  the  given
correlation if  there exist states  upon the input  of which
the   measurement  produces   the  correlation.    Generally
speaking,  our  aim  is  to   produce  an  inference  for  a
consistent  measurement  (of  course,  we  could  adopt  the
opposite approach  of inferring  the states).   However, the
measurement  consistent with  any given  correlation is,  of
course, in general not unique.  How should we proceed?

For   instance,   the   protocol  of   quantum   measurement
tomography~\cite{LS99,  BCDFP09}  addresses  this  issue  by
additionally imposing  that the  given correlation  has been
generated by  a given set  of states.  The linearity  of the
theory allows  then in principle to  recover the measurement
by linear  inversion.  Of  course, in this  case the  set of
states  cannot   itself  be   obtained  via   quantum  state
tomography,  because the  latter,  in  a symmetric  fashion,
would  require an  assumption on  the measurement  which, by
definition of  our problem,  is instead unspecified  and the
target of the inference.   In other words, tomography cannot
``bootstrap'' itself.

In this work  we adopt a different  approach, reminiscent of
Jaynes'  maximum  entropy  principle~\cite{Jay57a,  Jay57b}.
Among   all   measurements   consistent   with   the   given
correlation, we  infer the \emph{minimally  committal} ones,
that  is,  informally, those  that  are  consistent with  as
little else  as possible  other than the  given correlation.
We  formalize  the  committal  degree of  a  measurement  by
regarding each measurement as a map~\cite{Oza80} assigning a
probability  distribution (on  the measurement  outcomes) to
any  state.   The  probability  range of  a  measurement  is
therefore  the  set of  all  correlations  a measurement  is
consistent  with,   and  our   goal  becomes  to   find  the
measurements whose  range, while  still including  the given
correlations, is minimal in  volume.  Since this inferential
protocol does  not require  any additional input  other than
the   given    correlation,   it    is   referred    to   as
data-driven~\cite{BD19}, and it can  be used for instance to
bootstrap the  tomographic protocol  in the  sense discussed
above.

The choice of comparing measurements based on their range is
based on the  operational role of the  range in majorization
theory      and     statistical      comparison~\cite{MOA79,
  Tor91}. Indeed, for any  two measurements, range inclusion
is a necessary and  sufficient condition for one measurement
to  be  able  to  simulate  the  other  through  a  suitable
statistical transformation~\cite{BKDPW05, Bla53, Bus12}.  In
this sense, our  goal can be reframed as  finding, among all
the measurements consistent with the given correlations, the
most universally simulable.

This optimization  problem has already been  solved, for any
given    correlation,     for    the    case     of    qubit
measurements~\cite{DBBT18,  DHBS19}.  Here,  we address  the
arbitrary  dimensional case.   Our main  result consists  of
showing  that  any  measurement   that  produces  the  given
correlation upon the input of  a spherical $2$-design set of
states  is  minimally   committal.   Technically,  the  main
ingredient  in   our  proof   is  a  result   within  John's
theory~\cite{Joh48} on minimum  volume enclosing ellipsoids.
In this sense, the  data-driven inference protocol, although
by definition a search  for the least committal measurement,
turns out  to be equivalent  to the linear inversion  at the
hearth of  measurement tomography under the  hypothesis that
the set of states forms a design.

At the foundational  level, our results shines  new light on
the role played by symmetric, informationally complete (SIC)
structures,  and more  generally  designs and  the class  of
morphophoric    measurements~\cite{SS20,   SS23}    recently
introduced by S\/lomczy\'nski and  Szymusiak, in the quantum
Bayesian interpretation (QBism)~\cite{Fuc10, FS13, CMS14} of
quantum  theory.  So-far,  such  a role  has been  justified
based  on the  symmetry  of  the tomographic  reconstruction
formula  (inherited  by  the   symmetry  of  the  structures
themselves) when such structures  are adopted.  However, any
informationally   complete   (not   necessarily   symmetric)
structure  is  universal   for  tomographic  reconstruction,
albeit  with  a  less-symmetric  formula.   Instead,  if  the
tomographic   task   is   replaced  with   the   data-driven
inferential task we consider, as a consequence of our result
not any informationally complete  structure will do, and thus
the role of designs emerges naturally.

\section{Formalization}

In order to  introduce the data-driven inference  map, it is
convenient  to formulate  operational concepts  from quantum
theory,  such as  states  and  measurements, in  geometrical
terms.    Table~\ref{tab:conversion}  provides   conversions
between  the  Hilbert-space  formalism and  the  geometrical
formalism of quantum theory.
\begin{table}[h!]
  \begin{center}
  \begin{tabular}{|c|c|c|}
    \hline
    & Hilbert form. & Geom.\\ 
    \hline
    Lin. space & Hermitian $d \times d$ & $\mathbb{R}^{\ell = d^2}$\\
    \hline
    Unit effect & $\openone_d$ & $\mathbf{u}_{\ell}$\\
    \hline
    Normalization & $\Tr[\rho] = 1$ & $\mathbf{u}_{\ell} \cdot \mathbf{s} = 1$\\
    \hline
    Measur. & $\sum_{j=1}^n \pi_j = \openone_d$ & $M^T \mathbf{u}_n = \mathbf{u}_{\ell}$\\
    \hline    
    Inner prod. & Hilbert-Schmidt & Dot prod.\\
    \hline
    Born rule & $p_j = \Tr[ \rho \pi_j ]$ & $\mathbf{p} = M \mathbf{s}$\\
    \hline
    Purity & $\Tr [ \rho^2 ]$ & $|\mathbf{s}|^2$\\
    \hline
  \end{tabular}
  \end{center}
  \caption{Conversion   table   between  the   Hilbert-space
    formalism  and  the  geometrical  formalism  of  quantum
    theory.}
  \label{tab:conversion}
\end{table}
Figure~\ref{fig:geometry} provides  a quick overview  of the
geometrical formalism of quantum theory.
\begin{figure}[h!]
  \includegraphics[width=\columnwidth]{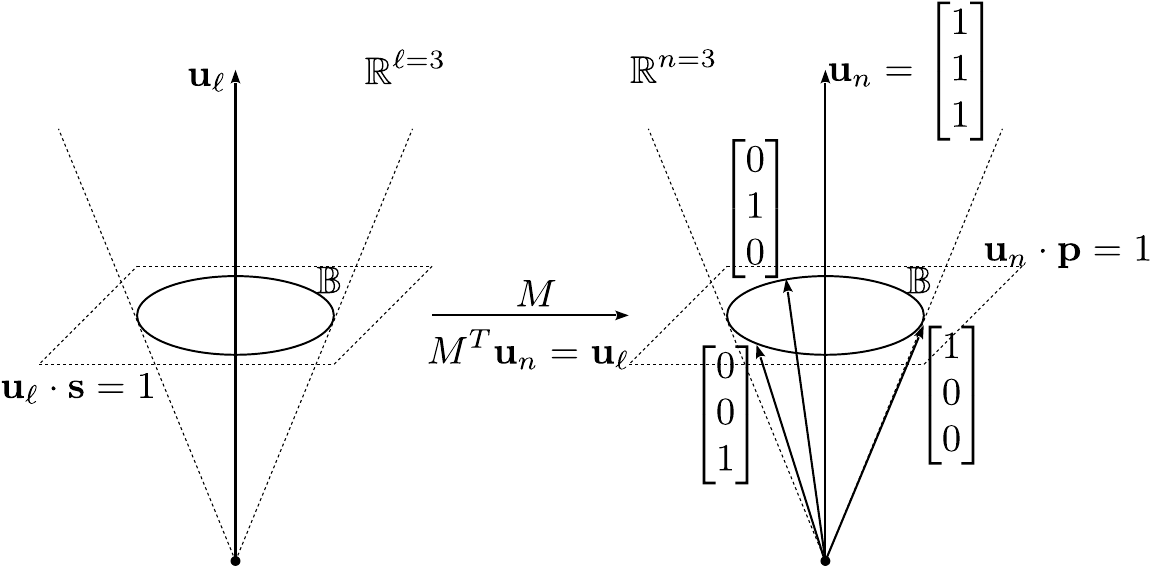}
  \caption{Geometrical  formalism of  quantum theory.   Left
    side: the linear space  of states and effects, including
    the unit effect $\mathbf{u}_{\ell}$, the hyperplane $\mathbf{u}_{\ell}
    \cdot \mathbf{s}  = 1$  where states  lie, and  the ball
    $\mathbb{B}$ on  whose surface  pure states  lie.  Right
    side:  the  probability   space,  including  the  vector
    $\mathbf{u}_n$   with   unit  elements,   the   hyperplane
    $\mathbf{u}_n   \cdot    \mathbf{p}$   where   probability
    distributions   lie,   and  the   extremal   probability
    distributions.  Any  measurement $M$  is a map  from the
    set $\mathbb{S}$ of admissible states to the probability
    space.}
  \label{fig:geometry}
\end{figure}
States   can  be   represented   by  real   vectors  in   an
$\ell$-dimensional real  space.  An  $n$-outcome quasi-measurement
$M$ is an $n \times \ell$ real matrix that associates to any
state $\mathbf{s}$ the quasi-probability distribution
\begin{align}
  \label{eq:born}
  \mathbf{p} := M \mathbf{s},
\end{align}
that is,  the entries of  $\mathbf{p}$ sum up to  the unity,
but  are not  necessarily positive.   The relaxation  of the
positivity constraint is typical of inferential protocols --
for example,  it is  shared by the  linear inversion  at the
heart of quantum tomography --, and is usually remedied by a
successive  search  of   the  closest  positivity-preserving
measurement,   according   to  some   relevant   operational
criterion. A quasi-measurement $M$  is said to be consistent
with quasi-probability distribution $\mathbf{p}$ if and only
if   there   exists   state  $\mathbf{s}$   that   generates
$\mathbf{p}$    when    measured     by    $M$,    as    per
Eq.~\eqref{eq:born}.  The  existence of a  unit measurement,
say $\mathbf{u}_{\ell}$, such that
\begin{align}
  \label{eq:states}
  \mathbf{u}_{\ell} \cdot \mathbf{s} = 1,
\end{align}
for   any  state   $\mathbf{s}$,   implies   that  the   set
$\mathbb{S}$ of  admissible states  lies on  the hyper-plane
orthogonal to  $\mathbf{u}_{\ell}$. We will  assume, without
loss of generality,  that $\mathbb{S}$ is a  spanning set of
$\mathbb{R}^\ell$.   Upon  choosing  $\mathbf{u}_{\ell}  \in
\mathbb{R}^{\ell}$ and $\mathbf{u}_n \in \mathbb{R}^n$ to be
the vectors whose elements are  all ones, the hyper-plane of
states    and   the    hyper-plane   of    quasi-probability
distributions  coincide,  which  will allow  for  a  unified
discussion.  The  fact that  for any state  $\mathbf{s}$ one
has $\mathbf{u}_n^T M  \mathbf{s} = 1$ is  equivalent to the
condition
\begin{align}
  \label{eq:identity}
  M^T \mathbf{u}_n = M^+ M \mathbf{u}_{\ell},
\end{align}
where $M^+$  denotes the Moore-Penrose  pseudoinverse.  From
Eq.~\eqref{eq:born},  informational   completeness  (IC)  is
equivalent  to the  condition that  $M$ is  left invertible,
that is, $M^+ M  = \openone_{\ell}$, where $\openone_{\ell}$
denotes the $\ell \times  \ell$ identity operator, and hence
$n \ge \ell$. For any  IC quasi-measurement $M$ one has that
$(\det M^T M)^{1/2}$  denotes, up to a  constant factor, the
volume of its probability range.

For     any    spanning     set    $\mathbb{S}     \subseteq
\mathbb{R}^{\ell}$   of  admissible   states  and   any  set
$\mathcal{P}   \subseteq    \mathbb{R}^n$   of   probability
distributions  spanning an  $\ell$-dimensional subspace,  we
denote with  $\mathcal{M}_{\mathbb{S}} ( \mathcal{P}  )$ the
set of quasi-measurements  from $\mathbb{S}$ consistent with
$\mathcal{P}$, that is
\begin{align*}
 \mathcal{M}_{\mathbb{S}}  \left(   \mathcal{P}  \right)  :=
 \left\{  M   \in  \mathbb{R}^{n  \times  \ell}   \Big|  M^T
 \mathbf{u}_n   =   \mathbf{u}_{\ell}   \wedge   \mathcal{P}
 \subseteq M \mathbb{S} \right\}.
\end{align*}

We  are  now in  a  position  to introduce  the  data-driven
inference map that, upon the input of any set of probability
distributions,  outputs the  set of  quasi-measurements that
are consistent with the input and minimally committal.

\begin{dfn}[Data-driven inference]
  Upon   the    input   of   any   set    $\mathcal{P}$   of
  quasi-probability          distributions          spanning
  $\mathbb{R}^{\ell}$,   the  output   of  the   data-driven
  inference map $\ddi_\mathbb{S} ( \mathcal{P} )$ is the set
  of quasi-measurements  consistent with  $\mathcal{P}$ with
  minimum-volume probability range, that is
  \begin{align}
    \label{eq:ddi}
    \ddi_\mathbb{S} \left( \mathcal{P} \right) := \argmin_{M
      \in   \mathcal{M}_{\mathbb{S}}    \left(   \mathcal{P}
      \right)} \det M^T M.
  \end{align}
\end{dfn}

\section{Main result}

In order to introduce our  main result, we need to formulate
some additional  operational concept from quantum  theory in
geometrical  terms.   The  purity of  a  state  $\mathbf{s}$
coincides with  its squared  norm $| \mathbf{s}  |^2$.  Pure
states, that is states with  unit purity $| \mathbf{s} |^2 =
1$, lie  on the  surface of  the ball  $\mathbb{B}$ obtained
intersecting the linear  constraint in Eq.~\eqref{eq:states}
with the cone, whose axis is $\mathbf{u}_{\ell}$, given by
\begin{align}
  \label{eq:cone}
  f \left( \mathbf{v} \right) := \left| \mathbf{v} \right|^2
  - \left( \mathbf{u}_{\ell} \cdot \mathbf{v} \right)^2 \le 0.
\end{align}
Notice  that  such  a  ball $\mathbb{B}$  is  in  general  a
superset of  the set  $\mathbb{S}$ of admissible  states.  A
probability  distribution $p$  over a  set $\mathcal{S}$  of
states  is a  spherical  $2$-design  if and  only  if it  is
indistinguishable  from  the  uniform  distribution  on  the
surface of $\mathbb{B}$ when given two copies. In formula
\begin{align}
  \sum_{\mathbf{s}  \in  \mathcal{S}}  p  \left(  \mathbf{s}
  \right)  \mathbf{s} \otimes  \mathbf{s}  =  \int O  \left(
  \mathbf{s}   \otimes   \mathbf{s}    \right)   O^T   \diff
  O\label{eq:design},
\end{align}
where $\mathbf{s}$ in the right-hand side is any pure state,
$O$    denotes    any    orthogonal   matrix    such    that
$\hat{\mathbf{u}}_{\ell}^T  O  \hat{\mathbf{u}}_{\ell} =  1$
(we  write $\hat{\cdot}$  to  denote the  unit vector),  and
$\diff O$  denotes the  invariant measure on  the orthogonal
group  $O(\ell -  1)$ in  the subspace  of $\mathbb{R}^\ell$
satisfying Eq.~\eqref{eq:states}.

We are  now in  a position  to state  our main  result, that
shows that the quasi-measurements  that output the given set
of probability  distributions upon the input  of a spherical
$2$-design are minimally committal.

\begin{thm}
  \label{thm:ddi}
  Upon   the    input   of   any   set    $\mathcal{P}$   of
  quasi-probability          distributions          spanning
  $\mathbb{R}^{\ell}$, quasi-measurement $M$  belongs to the
  output of the data driven inference map $\ddi_\mathbb{S} (
  \mathcal{P}  )$  if  the  counter-image  $\mathcal{S}$  of
  $\mathcal{P}$, that is
  \begin{align}
    \label{eq:master}
    \mathcal{P} = M \mathcal{S},
  \end{align}
  supports a spherical $2$-design.
\end{thm}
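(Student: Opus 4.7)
The strategy is to recast the optimisation in Eq.~\eqref{eq:ddi} as a minimum-volume enclosing ellipsoid (MVEE) problem and then to verify John's characterisation using the $2$-design identity. Denote by $M_0$ the candidate quasi-measurement satisfying $M_0 \mathcal{S} = \mathcal{P}$ with $\mathcal{S}$ supporting a spherical $2$-design of weights $\{p_\mathbf{s}\}_{\mathbf{s}\in\mathcal{S}}$. For any full-rank quasi-measurement $M$, the image $M \mathbb{B}$ is an $(\ell-1)$-dimensional ellipsoid whose volume equals $\sqrt{\det M^T M}$ up to a constant depending only on $\ell$. Since $\mathbb{S} \subseteq \mathbb{B}$, every $M \in \mathcal{M}_\mathbb{S}(\mathcal{P})$ satisfies $\mathcal{P} \subseteq M \mathbb{S} \subseteq M \mathbb{B}$, so each candidate ellipsoid $M \mathbb{B}$ encloses $\mathcal{P}$. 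Consequently $\det M^T M$ is bounded below by the squared volume of the unique MVEE $E^\star$ of $\mathcal{P}$, and optimality of $M_0$ reduces to the identification $M_0 \mathbb{B} = E^\star$.

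By the affine invariance of the MVEE, this identification is equivalent, after pulling back through $M_0$ by means of Eq.~\eqref{eq:identity}, to the statement that $\mathbb{B}$ is the MVEE of $\mathcal{S}$. John's theorem characterises this by the existence of positive weights $\{\lambda_\mathbf{s}\}$ concentrated on contact points $\mathbf{s} \in \mathcal{S} \cap \partial \mathbb{B}$ satisfying the centred isotropy conditions
\begin{align*}
\sum_{\mathbf{s}\in\mathcal{S}} \lambda_\mathbf{s}(\mathbf{s}-\mathbf{c}) = 0, \qquad \sum_{\mathbf{s}\in\mathcal{S}} \lambda_\mathbf{s}(\mathbf{s}-\mathbf{c})(\mathbf{s}-\mathbf{c})^T = \openone_\ell - \tfrac{1}{\ell}\mathbf{u}_\ell \mathbf{u}_\ell^T,
\end{align*}
where $\mathbf{c} = \mathbf{u}_\ell/\ell$ is the centre of $\mathbb{B}$ and the right-hand side is the identity on the hyperplane of states.

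Both identities are direct consequences of the $2$-design condition Eq.~\eqref{eq:design}. A short calculation of the uniform average of $\mathbf{s}\mathbf{s}^T$ over the pure-state sphere $\partial\mathbb{B}$ yields $\sum_{\mathbf{s}\in\mathcal{S}} p_\mathbf{s}\, \mathbf{s}\mathbf{s}^T = \openone_\ell/\ell$. Contracting this with $\mathbf{u}_\ell$ and using $\mathbf{u}_\ell\cdot\mathbf{s} = 1$ produces $\sum_\mathbf{s} p_\mathbf{s}\, \mathbf{s} = \mathbf{c}$; subtracting $\mathbf{c}\mathbf{c}^T$ then gives $\sum_\mathbf{s} p_\mathbf{s}\,(\mathbf{s}-\mathbf{c})(\mathbf{s}-\mathbf{c})^T = (\openone_\ell - \tfrac{1}{\ell}\mathbf{u}_\ell \mathbf{u}_\ell^T)/\ell$. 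The choice $\lambda_\mathbf{s} = (\ell-1)\, p_\mathbf{s}$, which absorbs the trace-$(\ell-1)$ normalisation of the hyperplane identity, then satisfies the John conditions exactly.

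The principal obstacle is bookkeeping rather than substance: one must consistently track the affine structure ($\mathbb{B}$ lives inside a hyperplane of $\mathbb{R}^\ell$), use the Moore--Penrose pseudoinverse to transport the $2$-design identity through $M_0$ without violating the normalisation $M^T \mathbf{u}_n = \mathbf{u}_\ell$, and reconcile the different normalisation conventions for John's theorem with our concrete $\mathbb{B}$. Once these identifications are settled, the entire weight of the argument rests on the single matrix identity delivered by the spherical $2$-design.
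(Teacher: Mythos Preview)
Your argument is correct and follows the same essential strategy as the paper: reduce to showing that $\mathbb{B}$ is the minimum-volume ellipsoid enclosing $\mathcal{S}$, and verify this via the isotropy condition supplied by the $2$-design identity $\sum_{\mathbf{s}} p_{\mathbf{s}}\,\mathbf{s}\mathbf{s}^T = \openone_\ell/\ell$. The packaging differs. The paper splits the proof into three lemmas: a self-contained derivation of the inequality $\det M^T M \ge 1$ for every $M$ with $\mathcal{S}\subseteq M\mathbb{B}$ (obtained from the $2$-design average together with the trace--log-determinant inequality, in effect reproving the needed direction of John's theorem rather than quoting it), then the passage from $\ddi_{\mathbb{B}}$ to $\ddi_{\mathbb{S}}$ via $\mathbb{S}\subseteq\mathbb{B}$, and finally the reduction from a general $M$ to $\openone_\ell$ by an explicit bijection between the two feasible sets that preserves the cost ordering. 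You compress the last two steps into the relaxation $M\mathbb{S}\subseteq M\mathbb{B}$ and affine invariance of the MVEE, and you invoke John's contact-point characterisation as a black box; this is more economical but hides exactly the bookkeeping you flag at the end (the pseudoinverse transport and the normalisation constraint $M^T\mathbf{u}_n=\mathbf{u}_\ell$), which the paper's Lemma~\ref{lmm:commuting} carries out in detail. One arithmetic slip: from $\sum_{\mathbf{s}} p_{\mathbf{s}}(\mathbf{s}-\mathbf{c})(\mathbf{s}-\mathbf{c})^T = (\openone_\ell - \tfrac{1}{\ell}\mathbf{u}_\ell\mathbf{u}_\ell^T)/\ell$, the multiplier that reproduces your stated hyperplane identity is $\lambda_{\mathbf{s}} = \ell\,p_{\mathbf{s}}$, not $(\ell-1)\,p_{\mathbf{s}}$; this is harmless, since only proportionality to the projected identity matters for John's criterion.
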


Notice that  Eq.~\eqref{eq:master} represents  a closed-form
characterization  of  $M$  whenever  $\mathcal{P}$  contains
$\ell$   linearly  independent   probability  distributions.
Indeed, in this case the  only $\mathcal{S}$ that supports a
spherical   $2$-design   is   the   regular   simplex,   and
Eq.~\eqref{eq:master} can  be inverted to  explicitly obtain
$M$.

\section{Proof of main result}

The following commuting diagram  summarizes the statement of
Thm.~\ref{thm:ddi}  (left vertical  arrow), as  well as  the
statements  of  the  three   Lemmas  (horizontal  and  right
vertical arrows) in which we split its proof.
\begin{align*}
  \CDfattrue
  \begin{CD}
    \mathcal{S}     \textrm{     sph.     $2$-design}     @Z
    \textrm{Lemma~\ref{lmm:design}}  ZZ O  \left(  \ell -  1
    \right)  \subseteq \ddi_{\mathbb{B}}  \left( \mathcal{S}
    \right)   \\   @V  \textrm{Thm.~\ref{thm:ddi}}   VV   @V
    \textrm{Lemma~\ref{lmm:outer}}      VV\\      M      \in
    \ddi_{\mathbb{S}}   \left(    \mathcal{P}   \right)   @Z
    \textrm{Lemma~\ref{lmm:commuting}}   ZZ  \openone_{\ell}
    \in \ddi_{\mathbb{S}} \left( \mathcal{S} \right)
  \end{CD}.
\end{align*}
Notice that, on the right hand side of the diagram (that is,
in the three  lemmas), the $\ddi$ map is applied  to the set
$\mathcal{S}$ of states, rather than to a set of probability
distributions.   This  is  consistent  with  our  choice  of
representing  states as  probability  distributions, as  per
Eq.~\eqref{eq:states}.

First, we modify a  proof technique used~\cite{Joh48} in the
related context  of minimum  volume enclosing  ellipsoids to
show  that  spherical  $2$-designs   lead  to  a  sufficient
condition for the  inclusion of the orthogonal  group in the
output  of the  data-driven inference  on the  ball of  pure
states.  Measurements  corresponding to  orthogonal matrices
are symmetric, informationally complete (SIC).

\begin{lmm}
  \label{lmm:design}
  Upon the  input of  any set  $\mathcal{S}$ of  states that
  supports a spherical design, the output of the data-driven
  inference map  $\ddi_{\mathbb{B}} ( \mathcal{S} )$  on the
  ball $\mathbb{B}$ includes any  orthogonal matrix $O$ such
  that $\hat{\mathbf{u}}_{\ell}^T  O \hat{\mathbf{u}}_{\ell}
  = 1$.
\end{lmm}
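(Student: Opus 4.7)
The plan is to directly lower-bound $\det M^T M$ over $\mathcal{M}_{\mathbb{B}}(\mathcal{S})$ by transporting John's quadratic-average argument from the theory of minimum volume enclosing ellipsoids into the geometrical formalism of Table~\ref{tab:conversion}. Any orthogonal $O$ satisfying $\hat{\mathbf{u}}_{\ell}^T O \hat{\mathbf{u}}_{\ell} = 1$ must fix $\mathbf{u}_{\ell}$ --- two unit vectors with inner product one necessarily coincide --- so $O^T \mathbf{u}_{\ell} = \mathbf{u}_{\ell}$ and $O \mathbb{B} = \mathbb{B} \supseteq \mathcal{S}$, giving $O \in \mathcal{M}_{\mathbb{B}}(\mathcal{S})$ with $\det O^T O = 1$. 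It therefore suffices to prove the matching lower bound $\det M^T M \ge 1$ for every admissible $M$.

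First I would observe that any admissible $M$ must be invertible. Since $\mathcal{S}$ supports a spherical $2$-design, Eq.~\eqref{eq:design} forces its second moment $\sum_{\mathbf{s}} p(\mathbf{s}) \mathbf{s}\mathbf{s}^T$ to be of full rank, so $\mathcal{S}$ spans $\mathbb{R}^{\ell}$; combined with $\mathcal{S} \subseteq M \mathbb{B}$, this yields $\rng(M) = \mathbb{R}^{\ell}$. Consequently, for each $\mathbf{s} \in \mathcal{S}$ the unique preimage $\mathbf{v}_{\mathbf{s}} := M^{-1}\mathbf{s}$ automatically satisfies the hyperplane condition $\mathbf{u}_{\ell} \cdot \mathbf{v}_{\mathbf{s}} = 1$ (using $M^T \mathbf{u}_{\ell} = \mathbf{u}_{\ell}$), and the membership $\mathbf{v}_{\mathbf{s}} \in \mathbb{B}$ collapses to the single scalar inequality $|M^{-1}\mathbf{s}|^2 \le 1$.

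The core step is then to weight this inequality by $p(\mathbf{s})$ and sum. Exploiting Eq.~\eqref{eq:design}, the symmetry of the Haar measure on $O(\ell - 1)$ stabilizing $\hat{\mathbf{u}}_{\ell}$, and the unit-purity condition $|\mathbf{s}|^2 = 1$, one identifies the isotropic second moment $\sum_{\mathbf{s}} p(\mathbf{s}) \mathbf{s}\mathbf{s}^T = \openone_{\ell}/\ell$, whence
\begin{align*}
  1 \ge \sum_{\mathbf{s}} p(\mathbf{s}) |M^{-1} \mathbf{s}|^2 = \frac{1}{\ell} \Tr \left( M^{-T} M^{-1} \right) = \frac{1}{\ell} \sum_{i=1}^{\ell} \sigma_i(M)^{-2}.
\end{align*}
AM-GM applied to the singular values closes the loop: $\sum_i \sigma_i(M)^{-2} \ge \ell (\det M^T M)^{-1/\ell}$, so $\det M^T M \ge 1$, with equality precisely when the singular spectrum of $M$ is flat --- which is the case for every admissible orthogonal $O$.

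The main technical obstacle is establishing the isotropy identity $\sum_{\mathbf{s}} p(\mathbf{s}) \mathbf{s}\mathbf{s}^T = \openone_{\ell}/\ell$: it is precisely here that the $2$-design hypothesis enters (as opposed to mere informational completeness), and it is exactly what forces the AM-GM step to saturate at any admissible orthogonal matrix. Everything else --- reducing the range constraint to one scalar inequality per element of $\mathcal{S}$ and the final AM-GM --- is a systematic transcription of John's classical proof technique into the present setup.
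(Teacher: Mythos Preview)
Your proposal is correct and follows essentially the same route as the paper: reduce feasibility to the scalar constraints $|M^{-1}\mathbf{s}|^2 \le 1$, average them against the $2$-design weights to obtain $\Tr(M^{-T}M^{-1}) \le \ell$ via the isotropy identity (the paper's Lemma~\ref{lmm:average}), and conclude $\det M^T M \ge 1$. The only cosmetic difference is that the paper phrases the last step as $\Tr[X - \openone_{\ell}] \ge \ln\det X$ rather than AM--GM on singular values, and handles the hyperplane constraint by averaging $f(M^{-1}\mathbf{s})$ and then invoking $M^{-T}\mathbf{u}_{\ell} = \mathbf{u}_{\ell}$, whereas you impose the latter first; both orderings yield the same inequality.
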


\begin{proof}
  Since  $\mathcal{S}$  supports   a  spherical  $2$-design,
  $\mathcal{S}$  is a  spanning set  of $\mathbb{R}^{\ell}$,
  and  hence any  $M$ in  the domain  of optimization  is an
  $\ell$-outcomes  invertible  quasi-measurement.   For  any
  $\ell$-outcome   invertible   quasi-measurement  $M$   the
  condition   $\mathcal{S}   \subseteq  M   \mathbb{B}$   is
  equivalent to the  condition $M^{-1} \mathcal{S} \subseteq
  \mathbb{B}$, which in turn is equivalent to the conditions
  $f  ( M^{-1}  \mathbf{s} )  \le 0$  and $\mathbf{u}_{\ell}
  \cdot   \mathbf{s}   =   1$  for   any   $\mathbf{s}   \in
  \mathcal{S}$.   From  Eq.~\eqref{eq:cone} one  immediately
  has
  \begin{align*}
    0  \ge   f  \left(  \mathbf{s}  \right)   =  \Tr  \left[
      \mathbf{s}     \otimes     \mathbf{s}    \right]     -
    \mathbf{u}_{\ell}^T \left( \mathbf{s} \otimes \mathbf{s}
    \right) \mathbf{u}_{\ell}.
  \end{align*}
  Hence
  \begin{align*}
    0 \ge  & f \left( M^{-1}  \mathbf{s} \right) \\ =  & \Tr
    \left[  M^{-1}  \left(   \mathbf{s}  \otimes  \mathbf{s}
      \right)  M^{-T} \right]  - \mathbf{u}_{\ell}^T  M^{-1}
    \left(  \mathbf{s}  \otimes  \mathbf{s}  \right)  M^{-T}
    \mathbf{u}_{\ell}.
  \end{align*}
  Due  to Lemma~\ref{lmm:average}  there exists  probability
  distribution $p$ such that
  \begin{align*}
    0  \ge  &  \sum_{\mathbf{s} \in  \mathcal{S}}  p  \left(
    \mathbf{s}  \right) f  \left( M^{-1}  \mathbf{s} \right)
    \nonumber\\ =  & \frac1{\ell}  \left( \Tr  \left[ M^{-1}
      M^{-T}  \right]  - \mathbf{u}_{\ell}^T  M^{-1}  M^{-T}
    \mathbf{u}_{\ell} \right).
  \end{align*}
  Due  to  Lemma~\ref{lmm:inversion}   also  $M^{-1}$  is  a
  quasi-measurement,   hence  $M^{-T}   \mathbf{u}_{\ell}  =
  \mathbf{u}_{\ell}$.  Since  $|\mathbf{u}_{\ell}|^2 = \ell$
  one has
  \begin{align}
    \label{eq:proof2}
    0  \ge \Tr  \left[ M^{-1}  M^{-T} \right]  - \ell  = \Tr
    \left[ M^{-1} M^{-T} - \openone_{\ell} \right].
  \end{align}
  
  Notice  that,  for   any  $X  >  0$,  one   has  $\Tr[X  -
    \openone_{\ell}] \ge \log \det  X$, with equality if and
  only  if $X  =  \openone_{\ell}$.  Using  this fact,  from
  Eq.~\eqref{eq:proof2} one has the following majorization
  \begin{align*}
    0 \ge \ln \det M^{-1} M^{-T},
  \end{align*}
  or equivalently
  \begin{align}
    \label{eq:proof3}
    \det M^T M \ge 1.
  \end{align}
  
  Summarizing,  Eq.~\eqref{eq:proof3}  shows that,  for  any
  $\ell$-outcome  quasi-measurement  $M$  in the  domain  of
  optimization of $\ddi_{\mathbb{B}}  ( \mathcal{S} )$, that
  is   $\mathcal{S}  \subseteq   M  \mathbb{B}$,   the  cost
  function, that  is $\det M^T  M $, attains a  value larger
  than or  equal to one, that  is the value attained  by the
  orthogonal group, thus proving the statement.
\end{proof}

Notice that  also the reverse of  Lemma~\ref{lmm:design} can
be proved by  similarly modifying a proof  technique used in
Ref.~\cite{Joh48}; however,  since such  a statement  is not
necessary in  order to prove Theorem~\ref{thm:ddi},  we omit
such a proof here.

Next,  we show  that an  outer approximation  of the  set of
admissible states in terms of  the ball of pure states leads
to   a   sufficient   condition   for   the   inclusion   of
quasi-measurement $\openone_{\ell}$ (the  $\ell \times \ell$
identity  operator)   in  the  output  of   the  data-driven
inference map.

\begin{lmm}
  \label{lmm:outer}
  Upon the input  of any given set  $\mathcal{S}$ of states,
  if   the  output   of   the   data-driven  inference   map
  $\ddi_{\mathbb{B}}   (   \mathcal{S}   )$  on   the   ball
  $\mathbb{B}$  includes the  quasi-measurement {\normalfont
    $\openone_{\ell}$},   then  the   the   output  of   the
  data-driven inference map $\ddi_{\mathbb{S}} ( \mathcal{S}
  )$  on  the set  $\mathbb{S}$  of  admissible states  also
  includes      the      quasi-measurement      {\normalfont
    $\openone_{\ell}$}.
\end{lmm}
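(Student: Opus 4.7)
The plan is to exploit the geometric inclusion $\mathbb{S}\subseteq\mathbb{B}$ (noted right after Eq.~\eqref{eq:cone}) to relate the two optimization problems by a feasibility-set argument, rather than by any new calculation. The hypothesis says that $\openone_\ell$ minimizes the cost $\det M^T M$ over the larger feasible set $\mathcal{M}_{\mathbb{B}}(\mathcal{S})$, and the conclusion asks for minimality over the smaller feasible set $\mathcal{M}_{\mathbb{S}}(\mathcal{S})$; since the objective does not depend on the state space at all, the whole content of the lemma is that shrinking the state space can only shrink, never enlarge, the feasible set, and that $\openone_\ell$ remains feasible after shrinking.

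First I would spell out the two feasible sets from the definition just above Eq.~\eqref{eq:ddi}, namely
\begin{align*}
\mathcal{M}_{\mathbb{B}}(\mathcal{S}) &=\{M:M^T\mathbf{u}_n=\mathbf{u}_\ell,\ \mathcal{S}\subseteq M\mathbb{B}\},\\
\mathcal{M}_{\mathbb{S}}(\mathcal{S}) &=\{M:M^T\mathbf{u}_n=\mathbf{u}_\ell,\ \mathcal{S}\subseteq M\mathbb{S}\}.
\end{align*}
Then I would observe that, because $\mathbb{S}\subseteq\mathbb{B}$, for every matrix $M$ one has $M\mathbb{S}\subseteq M\mathbb{B}$, so the condition $\mathcal{S}\subseteq M\mathbb{S}$ is strictly stronger than $\mathcal{S}\subseteq M\mathbb{B}$. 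This gives the inclusion $\mathcal{M}_{\mathbb{S}}(\mathcal{S})\subseteq\mathcal{M}_{\mathbb{B}}(\mathcal{S})$.

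Next I would check feasibility of the candidate: $\openone_\ell$ satisfies the normalization $\openone_\ell^T\mathbf{u}_\ell=\mathbf{u}_\ell$ trivially, and it satisfies $\mathcal{S}\subseteq\openone_\ell\mathbb{S}=\mathbb{S}$ because $\mathcal{S}\subseteq\mathbb{S}$ (the input is a set of admissible states). Hence $\openone_\ell\in\mathcal{M}_{\mathbb{S}}(\mathcal{S})$.

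Finally I would invoke the elementary principle that if $\openone_\ell$ is an $\argmin$ of $\det M^T M$ over $\mathcal{M}_{\mathbb{B}}(\mathcal{S})$ and also lies in the subset $\mathcal{M}_{\mathbb{S}}(\mathcal{S})\subseteq\mathcal{M}_{\mathbb{B}}(\mathcal{S})$, then it is automatically an $\argmin$ over the smaller set as well, since any competitor in $\mathcal{M}_{\mathbb{S}}(\mathcal{S})$ is a competitor in $\mathcal{M}_{\mathbb{B}}(\mathcal{S})$ and hence cannot beat $\openone_\ell$. This yields $\openone_\ell\in\ddi_{\mathbb{S}}(\mathcal{S})$. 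There is no substantive obstacle here; the only thing to be careful about is not conflating the subscript (state space) with the argument (input data) of the $\ddi$ map, so I would stress at the outset that both $\ddi_{\mathbb{B}}(\mathcal{S})$ and $\ddi_{\mathbb{S}}(\mathcal{S})$ minimize the same objective $\det M^T M$ and differ only in their feasibility constraints.
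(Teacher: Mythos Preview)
Your argument is correct and is essentially the same as the paper's: both hinge on the inclusion $\mathcal{M}_{\mathbb{S}}(\mathcal{S})\subseteq\mathcal{M}_{\mathbb{B}}(\mathcal{S})$ coming from $\mathbb{S}\subseteq\mathbb{B}$, together with the feasibility of $\openone_\ell$ for the smaller problem because $\mathcal{S}\subseteq\mathbb{S}$. The only difference is packaging---you argue directly, whereas the paper phrases the same two observations as a reductio ad absurdum---but the mathematical content is identical.
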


\begin{proof}
  To  prove  the  statement,   we  proceed  by  reductio  ad
  absurdum.

  Since  by   negation  of  the  thesis   $\openone_{\ell}  \not\in
  \ddi_{\mathbb{S}}      (\mathcal{S})$,      either      i)
  $\ddi_{\mathbb{S}}  (\mathcal{S})  =  \emptyset$,  or  ii)
  there exists a  quasi-measurement $M \in \ddi_{\mathbb{S}}
  (\mathcal{S})$ such that the value of the cost function in
  $M$ is strictly smaller than the value in $\openone_{\ell}$, that
  is, $\det M^T M < \det \openone_{\ell} = 1$.  Since $\mathcal{S}$
  is  a   subset  of  $\mathbb{S}$,   the  quasi-measurement
  $\openone_{\ell}$  is a  feasible  point of  the optimization  of
  $\ddi_{\mathbb{S}}  (   \mathcal{S}  )$,   thus  excluding
  alternative i) and leaving us with alternative ii).
    
  Since  any  set $\mathbb{S}$  of  admissible  states is  a
  subset  of the  ball  $\mathbb{B}$ on  whose surface  pure
  states  lie, that  is  $\mathbb{S} \subseteq  \mathbb{B}$,
  from  the consistency  condition $\mathcal{S}  \subseteq M
  \mathbb{S}$ one has  $\mathcal{S} \subseteq M \mathbb{B}$.
  Hence, quasi-measurement  $M$ is  a feasible point  of the
  optimization  of  $\ddi_{\mathbb{B}}   (  \mathcal{S}  )$.
  Since $\det  M^T M  < \det \openone_{\ell}  = 1$,  one has
  that     $\openone_{\ell}    \not\in     \ddi_{\mathbb{B}}
  (\mathcal{S})$, that contradicts the hypothesis.
\end{proof}

Notice   that   reversing   the   logical   implication   of
Lemma~\ref{lmm:outer}  the   statement  fails   in  general;
therefore,  Lemma~\ref{lmm:outer}  is  the  reason  why  the
logical implication of Theorem~\ref{thm:ddi} is one-way only
for arbitrary  dimension.  However,  for qubits it  is clear
that the logical implication of Lemma~\ref{lmm:outer} can be
reversed,   leading    in   this   case   to    a   complete
characterization  of  minimally  committal  measurements  in
terms of spherical $2$-designs.

Third,   we    recast   the    inclusion   of    any   given
quasi-measurement in the output of the data-driven inference
map as the inclusion of measurement $M = \openone_{\ell}$ in
the output of the data-driven inference map.

\begin{lmm}
  \label{lmm:commuting}
  Upon  the input  of any  set $\mathcal{P}$  of probability
  distributions   spanning   an  $\ell$-dimensional   space,
  quasi-measurement  $M$ belongs  to the  output of  the map
  $\ddi_{\mathbb{S}} (  \mathcal{P} )$  if and only  if upon
  the   input   of   the  set   $\mathcal{S}$   of   states,
  quasi-measurement {\normalfont  $\openone_{\ell}$} belongs
  to the output of  the map $\ddi_{\mathbb{S}} ( \mathcal{S}
  )$, where $\mathcal{S}$ is given by Eq.~\eqref{eq:master}.
\end{lmm}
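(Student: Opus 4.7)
The plan is to establish a \emph{change of variables} identifying the optimization problem defining $\ddi_{\mathbb{S}}(\mathcal{P})$ with the one defining $\ddi_{\mathbb{S}}(\mathcal{S})$, so that $M$ corresponds to $\openone_{\ell}$ and the respective cost functions differ only by a positive multiplicative constant; the biconditional will then be immediate from multiplicativity of the determinant.

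First I would show that every feasible $N\in\mathcal{M}_{\mathbb{S}}(\mathcal{P})$ shares with $M$ the same $\ell$-dimensional range $V:=\mathrm{span}\,\mathcal{P}\subseteq\mathbb{R}^{n}$, and in particular is IC. Indeed, $\mathcal{P}\subseteq N\mathbb{S}\subseteq N\mathbb{R}^{\ell}$ forces the $\ell$-dimensional space $V$ to sit inside the at-most-$\ell$-dimensional space $N\mathbb{R}^{\ell}$, so the inclusion is an equality and $N$ is left-invertible. Applied to $M$ itself this also guarantees that $\mathcal{S}=M^{+}\mathcal{P}$ is well-defined and that $MM^{+}$ is the orthogonal projector onto $V$.

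Next I would introduce the correspondence $N\leftrightarrow T$ defined by $T:=M^{+}N$, equivalently $N=MT$ since $MM^{+}$ acts as the identity on $V\supseteq N\mathbb{R}^{\ell}$. The quasi-measurement normalization $N^{T}\mathbf{u}_{n}=\mathbf{u}_{\ell}$ translates, via $M^{T}\mathbf{u}_{n}=\mathbf{u}_{\ell}$, into $T^{T}\mathbf{u}_{\ell}=\mathbf{u}_{\ell}$; while the consistency condition $\mathcal{P}\subseteq N\mathbb{S}=MT\mathbb{S}$, upon left-multiplication by $M^{+}$, becomes $\mathcal{S}\subseteq T\mathbb{S}$. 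Thus $N\mapsto T$ is a bijection from $\mathcal{M}_{\mathbb{S}}(\mathcal{P})$ onto $\mathcal{M}_{\mathbb{S}}(\mathcal{S})$ sending $M$ to $\openone_{\ell}$.

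Finally, multiplicativity of the determinant gives
\begin{align*}
\det N^{T}N=\det(T^{T}M^{T}MT)=(\det T)^{2}\det M^{T}M,
\end{align*}
and since $\det M^{T}M>0$ is constant in $T$, the argmin of $\det N^{T}N$ over $\mathcal{M}_{\mathbb{S}}(\mathcal{P})$ corresponds bijectively to the argmin of $\det T^{T}T$ over $\mathcal{M}_{\mathbb{S}}(\mathcal{S})$; since this bijection sends $M$ to $\openone_{\ell}$, the claim follows. The main obstacle I anticipate is rigorously establishing that every feasible $N$ shares the range $V$ with $M$, so that $N=MT$ genuinely covers the whole feasible set; once this is secured, the determinant identity makes the rest of the argument mechanical.
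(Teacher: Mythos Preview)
Your proposal is correct and follows essentially the same strategy as the paper: establish the bijection $N\leftrightarrow T:=M^{+}N$ between $\mathcal{M}_{\mathbb{S}}(\mathcal{P})$ and $\mathcal{M}_{\mathbb{S}}(\mathcal{S})$, then show the cost functions differ only by the positive constant $\det M^{T}M$. Your version is somewhat more streamlined, since by making explicit up front that every feasible $N$ has range exactly $V=\mathrm{span}\,\mathcal{P}$ (hence $N=MT$ with $T$ square), the determinant identity becomes a one-line application of multiplicativity, whereas the paper arrives at the same conclusion via a singular-value-decomposition computation together with the observation $NN^{+}=MM^{+}$.
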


\begin{proof}
  The proof  is split in two  parts.  In the first  part, we
  prove  that quasi-measurement  $M$ is  bijective from  the
  optimization   domain   $\mathcal{M}_{\mathbb{S}}  (   M^+
  \mathcal{P} )$ of map $\ddi_{\mathbb{S}} ( M^+ \mathcal{P}
  )$ to the  optimization domain $\mathcal{M}_{\mathbb{S}} (
  \mathcal{P} )$ of map $\ddi_{\mathbb{S}} ( \mathcal{P} )$.
  In the  second part,  we prove that  quasi-measurement $M$
  preserves the ordering induced  by the cost function given
  by the range volume.

  For any quasi-measurement  $L \in \mathcal{M}_{\mathbb{S}}
  ( M^+ \mathcal{P} ) $ one has that:
  \begin{itemize}
  \item $M L$ is  informationally complete, that is, $(ML)^+
    (ML) = \openone_{\ell}$, as  it immediately follows from
    the informational completeness of $M$ and $L$.
  \item  $M  L$ is  a  quasi-measurement,  that is,  $(ML)^T
    \mathbf{u}_n = \mathbf{u}_{\ell}$. Indeed
    \begin{align*}
      \left(   M  L   \right)^T  \mathbf{u}_n   =  L^T   M^T
      \mathbf{u}_n     =     L^T     \mathbf{u}_{\ell}     =
      \mathbf{u}_{\ell},
    \end{align*}
    where  the second  equality follows  from the  fact that
    $M^T  \mathbf{u}_n =  \mathbf{u}_{\ell}$  and the  third
    equality    follows   from    the    fact   that    $L^T
    \mathbf{u}_{\ell} = \mathbf{u}_{\ell}$.
  \item  $M L$  is  consistent with  $\mathcal{P}$, that  is
    $\mathcal{P}  \subseteq M  L \mathbb{S}$.   Indeed, from
    the hypothesis $M^+  \mathcal{P} \subseteq L \mathbb{S}$
    one has  $M M^+  \mathcal{P} \subseteq M  L \mathbb{S}$,
    and from the fact that $M M^+ \mathcal{P} = \mathcal{P}$
    one has $\mathcal{P} \subseteq M L \mathbb{S}$.
  \end{itemize}
  Hence
  \begin{align}
    \label{eq:inclusion1}
    M   \mathcal{M}_{\mathbb{S}}   \left(  M^+   \mathcal{P}
    \right)   \subseteq    \mathcal{M}_{\mathbb{S}}   \left(
    \mathcal{P} \right).
  \end{align}
  Since     $M^+     M      =     \openone_{\ell}$,     from
  Eq.~\eqref{eq:inclusion1}        one       also        has
  \begin{align}
    \label{eq:inclusion2}
    \mathcal{M}_{\mathbb{S}} \left(  M^+ \mathcal{P} \right)
    \subseteq     M^+    \mathcal{M}_{\mathbb{S}}     \left(
    \mathcal{P} \right).
  \end{align}

  For any quasi-measurement  $N \in \mathcal{M}_{\mathbb{S}}
  ( \mathcal{P} ) $ one has that:
  \begin{itemize}
    \item  $M^+ N$  is  informationally  complete, that  is,
      $(M^+ N)^+ (M^+ N) = \openone_{\ell}$.  Indeed
      \begin{align*}
        \left( M^+ N \right)^+ \left( M^+  N \right) = N^+ M M^+
        N = N^+ N = \openone_{\ell},
      \end{align*}
      where the second equality  follows from the facts that
      $M  M^+   \mathcal{P}  =   \mathcal{P}$  and   $N  N^+
      \mathcal{P} = \mathcal{P}$.
    \item  $M^+ N$  is a  quasi-measurement, that  is, $(M^+
      N)^T \mathbf{u}_{\ell} = \mathbf{u}_{\ell}$. Indeed
      \begin{align*}
        \left( M^+ N \right)^T \mathbf{u}_{\ell} = N^T M M^+
        \mathbf{u}_n = N^T \mathbf{u}_n = \mathbf{u}_{\ell},
      \end{align*}
      where    the     first    equality     follows    from
      Lemma~\ref{lmm:inversion}, the second equality follows
      from  the fact  that  $M M^+  N =  N$,  and the  third
      equality follows from the  fact that $N^T \mathbf{u}_n
      = \mathbf{u}_{\ell}$.
    \item $M^+ N$ is consistent with $M^+ \mathcal{P}$, that
      is $M^+ \mathcal{P} \subseteq M^+ N \mathbb{S}$, as it
      immediately follows  from the  hypothesis $\mathcal{P}
      \subseteq N \mathbb{S}$.
  \end{itemize}
  Hence
  \begin{align}
    \label{eq:inclusion0}
    M^+ \mathcal{M}_{\mathbb{S}}  \left( \mathcal{P} \right)
    \subseteq     \mathcal{M}_{\mathbb{S}}    \left(     M^+
    \mathcal{P} \right).
  \end{align}
  Since $M^+  M N  = N$  for any  $N \in  \mathcal{N}$, from
  Eq.~\eqref{eq:inclusion0}   one   also  has
  \begin{align}
    \label{eq:inclusion3}
    \mathcal{M}_{\mathbb{S}}   \left(  \mathcal{P}   \right)
    \subseteq   M    \mathcal{M}_{\mathbb{S}}   \left(   M^+
    \mathcal{P} \right).
  \end{align}

  Combining                      Eqs.~\eqref{eq:inclusion0},
  \eqref{eq:inclusion1},              \eqref{eq:inclusion2},
  and~\eqref{eq:inclusion3}, one immediately has
  \begin{align*}
    \mathcal{M}_{\mathbb{S}} \left(  M^+ \mathcal{P} \right)
    &  =  M^+  \mathcal{M}_{\mathbb{S}}  \left(  \mathcal{P}
    \right),\\  \mathcal{M}_{\mathbb{S}} \left(  \mathcal{P}
    \right)  &  =   M  \mathcal{M}_{\mathbb{S}}  \left(  M^+
    \mathcal{P} \right),
  \end{align*}
  that   is,  quasi-measurement   $M$   is  bijective   from
  $\mathcal{M}_{\mathbb{S}}    (   M^+    \mathcal{P})$   to
  $\mathcal{M}_{\mathbb{S}} ( \mathcal{P})$, which concludes
  the first part of the proof.
  
  For any quasi-measurement  $L \in \mathcal{M}_{\mathbb{S}}
  ( M^+ \mathcal{P} ) $ by explicit computation one has that
  \begin{align*}
    \det \left( M L \right)^T \left(  M L \right) = \det M^T
    M \det L^T L.
  \end{align*}

  For any quasi-measurement  $N \in \mathcal{M}_{\mathbb{S}}
  ( \mathcal{P} ) $ one has that
  \begin{align*}
    & \det \left( M^+ N \right)^T \left( M^+ N \right)\\ = &
    \det N^T \left(M^+\right)^T M^+ N\\ = & \det O^T D^T V^T
    W^T E^T U^T  U E W V D  O\\ = & \det D^T  \left( V^T W^T
    E^T E W  V \right) D\\ = &  \det D^T D \det E^T  E\\ = &
    \left( \det M^T M \right)^{-1} \det N^T N,
  \end{align*}
  where we made use of  the singular value decompositions $N
  = V D  O$ and $M^+ =  U E W$ for  some orthogonal matrices
  $O, U \in \mathbb{R}^{\ell  \times \ell}$, some isometries
  $V, W^T \in \mathbb{R}^{n \times \ell}$, and some diagonal
  matrices $D, E \in  \mathbb{R}^{\ell \times \ell}$, and we
  have used  the fact  that, since $\mathcal{P}  \subseteq N
  \mathbb{S}$ and $\mathcal{P} =  M \mathcal{S}$, one has $N
  N^+  = M  M^+$ and  hence $V  V^+ =  W^+ W$,  thus $W  V =
  \openone_{\ell}$.

  Since  $\det  M^T  M$   is  a  positive  constant  factor,
  quasi-measurement  $M$ preserves  the ordering  induced by
  the cost function, which concludes the proof.
\end{proof}

This concludes  the proof  of Theorem~\ref{thm:ddi}.

\section{Conclusions and open problems}

In  this  work we  showed  that  a data-driven  approach  to
quantum statistical  inference is possible with  designs and
cannot by accomplished by arbitrary informationally complete
structures, thus suggesting that data-driven inference, more
than quantum tomography, should  be regarded as the scenario
in which  the role of designs  naturally emerges.  Similarly
to  Jaynes'  maximum   entropy  principle,  our  inferential
approach  is based  on a  minimality principle  according to
which, among all the possible inferences consistent with the
data,  the weakest  should  be preferred,  in  the sense  of
majorization theory and statistical comparison.

To conclude, we  believe it would be relevant  to extend our
analysis of the  role played by symmetric  structures in the
data-driven      inference      to     the      morphophoric
structures~\cite{SS20, SS23} recently introduced and studied
by   S\/lomczy\'nski  and   A.    Szymusiak.  Indeed,   such
structures  generalize designs  in a  way that  preserves at
least some of the properties analyzed in this work.

\section{Acknowledgments}

M.~D.   acknowledges support  from  Toyohashi University  of
Technology,  the  international   research  unit  in  Quantum
Information  in  Kyoto  University, the  MEXT  Quantum  Leap
Flagship Program  (MEXT Q-LEAP) Grant  No.  JPMXS0118067285,
and the JSPS KAKENHI Grant Number JP20K03774.

\appendix

\section{Properties of measurements and designs}

In  this section  we present  some elementary  properties of
quasi-measurements and spherical $2$-designs that we used in
the main text.

\begin{lmm}[Closure under inversion]
  \label{lmm:inversion}
  The   set   of    quasi-measurements   is   closed   under
  pseudoinversion,   that  is,   for  any   matrix  $M   \in
  \mathbb{R}^{n \times \ell}$ such  that $M^T \mathbf{u}_n =
  M^+  M  \mathbf{u}_{\ell}$,  one  also has  $(  M^+  )^{T}
  \mathbf{u}_{\ell} = M M^+ \mathbf{u}_n$.
\end{lmm}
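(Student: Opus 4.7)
The plan is to apply $(M^+)^T$ on the left to both sides of the hypothesis $M^T \mathbf{u}_n = M^+ M \mathbf{u}_{\ell}$, and then simplify each side using the standard Moore-Penrose identities $M M^+ M = M$, $M^+ M M^+ = M^+$, the symmetry of the two projectors $M M^+$ and $M^+ M$, and the transposition rule $(M^+)^T = (M^T)^+$.

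For the left-hand side the simplification is immediate: since $M M^+$ is symmetric, $(M^+)^T M^T = (M M^+)^T = M M^+$, so the left side collapses to $M M^+ \mathbf{u}_n$, which is precisely the right-hand side of the identity I want to prove.

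For the right-hand side I would use the symmetry of $M^+ M$ to rewrite
\begin{align*}
  (M^+)^T M^+ M = (M^+)^T (M^+ M)^T = (M^+)^T M^T (M^+)^T = M M^+ (M^+)^T,
\end{align*}
so after the application the right side becomes $M M^+ (M^+)^T \mathbf{u}_{\ell}$. The key (minor) observation is that the range of $(M^+)^T$ coincides with the range of $M$ — this is immediate from the SVD $M = U \Sigma V^T$, which gives $(M^+)^T = U (\Sigma^+)^T V^T$, a matrix whose column space is the same as that of $M$. Consequently the orthogonal projector $M M^+$ acts as the identity on $(M^+)^T \mathbf{u}_{\ell}$, and the right side collapses to $(M^+)^T \mathbf{u}_{\ell}$. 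Equating the two simplified sides yields the lemma.

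The proof has no genuine obstacle: it is a short manipulation built entirely from the four defining properties of the Moore-Penrose pseudoinverse. The only point requiring a moment's thought is the range coincidence $\rng(M^+)^T = \rng M$ used to absorb the projector $M M^+$ in the last step; everything else is algebraic rearrangement of transposes.
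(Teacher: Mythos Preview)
Your proof is correct and follows the same route as the paper's: multiply the hypothesis on the left by $(M^+)^T$ and simplify both sides using the Moore--Penrose identities. The only difference is that the paper collapses the right-hand side in one step via $(M^+)^T M^+ M = (M^+)^T$ (the transpose of $M^+ M M^+ = M^+$, using the symmetry of $M^+ M$), which makes your range-coincidence detour unnecessary.
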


\begin{proof}
  By hypothesis one has
  \begin{align*}
    M^T \mathbf{u}_n = M^+ M \mathbf{u}_{\ell}.
  \end{align*}
  By multiplying from the left by $( M^+ )^T$ one has
  \begin{align*}
    \left(  M^+  \right)^T  M^T \mathbf{u}_n  =  \left(  M^+
    \right)^T M^+ M \mathbf{u}_{\ell}.
  \end{align*}
  By using the identities $( M^+ )^T M^T = M M^+$ and $( M^+
  )^T M^+ M = ( M^+ )^T$ one has
  \begin{align*}
    \left(   M^+  \right)^T   \mathbf{u}_{\ell}   =  M   M^+
    \mathbf{u}_n,
  \end{align*}
  which proves the statement.
\end{proof}

\begin{lmm}[Spherical $2$-designs]
  \label{lmm:average}
  For any spherical $2$-design  $p$ over a set $\mathcal{S}$
  of states one has{\normalfont
  \begin{align*}
    \sum_{\mathbf{s}  \in \mathcal{S}}  p \left(  \mathbf{s}
    \right)     \mathbf{s}      \otimes     \mathbf{s}     =
    \frac{\openone_{\ell}}{\ell}.
  \end{align*}}
\end{lmm}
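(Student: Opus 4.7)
The plan is to evaluate the right-hand side of the design condition~\eqref{eq:design} in closed form. By hypothesis, $\sum_{\mathbf{s}\in\mathcal{S}} p(\mathbf{s})\,\mathbf{s}\otimes\mathbf{s}$ already equals the twirling integral $\int O(\mathbf{s}\otimes\mathbf{s})O^T\,\diff O$, so it suffices to show that this integral equals $\openone_{\ell}/\ell$ for any pure $\mathbf{s}$.

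First I would exploit the symmetry of the measure. The constraint $\hat{\mathbf{u}}_{\ell}^T O \hat{\mathbf{u}}_{\ell}=1$ combined with the orthogonality of $O$ forces $O\hat{\mathbf{u}}_{\ell}=\hat{\mathbf{u}}_{\ell}$, so $\diff O$ is the Haar measure on the stabilizer of $\hat{\mathbf{u}}_{\ell}$, acting as the full orthogonal group on the $(\ell-1)$-dimensional subspace $\hat{\mathbf{u}}_{\ell}^\perp$ and trivially along $\hat{\mathbf{u}}_{\ell}$. Consequently the integral commutes with every element of this stabilizer, and by Schur's lemma must be block-diagonal in the decomposition $\mathrm{span}(\hat{\mathbf{u}}_{\ell})\oplus \hat{\mathbf{u}}_{\ell}^\perp$ and a scalar on each block. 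Hence it has the form $a\,\hat{\mathbf{u}}_{\ell}\hat{\mathbf{u}}_{\ell}^T + b\,(\openone_{\ell}-\hat{\mathbf{u}}_{\ell}\hat{\mathbf{u}}_{\ell}^T)$ for some scalars $a,b$.

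Two scalar equations then pin down $a$ and $b$. Taking the trace of both sides of~\eqref{eq:design} gives $a+(\ell-1)b = \Tr[\mathbf{s}\otimes\mathbf{s}] = |\mathbf{s}|^2 = 1$, since $\mathbf{s}$ is pure. Sandwiching the integral between $\hat{\mathbf{u}}_{\ell}^T$ on the left and $\hat{\mathbf{u}}_{\ell}$ on the right and using $O\hat{\mathbf{u}}_{\ell}=\hat{\mathbf{u}}_{\ell}$, the integrand reduces to the $O$-independent constant $(\hat{\mathbf{u}}_{\ell}\cdot\mathbf{s})^2$; the state normalization $\mathbf{u}_{\ell}\cdot\mathbf{s}=1$ together with $|\mathbf{u}_{\ell}|^2=\ell$ gives $\hat{\mathbf{u}}_{\ell}\cdot\mathbf{s}=1/\sqrt{\ell}$, so $a=1/\ell$. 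Solving the trace equation then yields $b=1/\ell$ as well. The two coefficients coincide and the integral collapses to $\openone_{\ell}/\ell$, which is the desired identity.

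The only subtlety I anticipate is bookkeeping: being careful that the relevant symmetry group is the stabilizer of $\hat{\mathbf{u}}_{\ell}$ rather than all of $O(\ell)$ (so that one cannot immediately conclude proportionality to $\openone_{\ell}$ without checking the two coefficients), and tracking the $\sqrt{\ell}$ factor relating $\mathbf{u}_{\ell}$ to its unit-norm counterpart. Once these are in place the argument is a standard twirl computation.
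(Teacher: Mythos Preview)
Your proposal is correct and follows essentially the same argument as the paper: both use the invariance of the twirl under the stabilizer of $\hat{\mathbf{u}}_{\ell}$ to reduce the integral to the ansatz $a\,\hat{\mathbf{u}}_{\ell}\hat{\mathbf{u}}_{\ell}^T + b\,(\openone_{\ell}-\hat{\mathbf{u}}_{\ell}\hat{\mathbf{u}}_{\ell}^T)$, then fix the two scalars by sandwiching with $\hat{\mathbf{u}}_{\ell}$ and by taking the trace. Your write-up is in fact slightly more explicit than the paper's (you justify $O\hat{\mathbf{u}}_{\ell}=\hat{\mathbf{u}}_{\ell}$ and invoke Schur's lemma), but the route is the same.
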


\begin{proof}
  From   the  invariance   under   transformations  in   the
  orthogonal group $O ( \ell -  1 )$, the right-hand side of
  Eq.~\eqref{eq:design} is given by
  \begin{align}
    \label{eq:design0}
    \int O \left( \mathbf{s}  \otimes \mathbf{s} \right) O^T
    \diff     O    =     \lambda    \left(     \openone_{\ell}    -
    \hat{\mathbf{u}}_{\ell}  \otimes \hat{\mathbf{u}}_{\ell}
    \right)    +    \nu   \hat{\mathbf{u}}_{\ell}    \otimes
    \hat{\mathbf{u}}_{\ell},
  \end{align}
  for some  constants $\lambda$  and $\nu$.   Constant $\nu$
  can   be   obtained   by   multiplying   both   sides   of
  Eq.~\eqref{eq:design0}  by  $\mathbf{u}_{\ell}^T$  to  the
  left  and  by  $\mathbf{u}_{\ell}$  to  the  right.   From
  Eq.~\eqref{eq:states},  the  left-hand  side  equals  one.
  Hence one has $1 = \nu | \mathbf{u}_{\ell} |^2$, and since
  $| \mathbf{u}_{\ell} |^2  = \ell$ one has  $\nu = 1/\ell$.
  Constant $\lambda$  can be obtained by  tracing both sides
  of Eq.~\eqref{eq:design0}.  Since for  pure states one has
  $| \mathbf{s}  |^2 =  1$, the  left-hand side  equals one.
  Hence  one has  $1 =  \lambda  ( \ell  -  1 )  + \nu$,  or
  equivalently $\lambda = \nu = 1/\ell$.
\end{proof}

Notice   that  the   statement  of   Lemma~\ref{lmm:average}
immediately follows from the fact (not assumed in the proof)
that regular  simplices are spherical $2$-designs.   In this
case, as exemplified by Fig.~\ref{fig:geometry}, it suffices
to consider the regular simplex of probability distributions
to immediately prove the statement of the lemma.

\end{document}